\tikzstyle{blank}=[rectangle,draw=white,fill=white,thick, inner sep=2pt,minimum
\tikzstyle{cir}=[circle,draw=black,fill=white,thick, inner sep=1pt,minimum
\tikzstyle{rcir}=[circle,draw=red,fill=white,thick, inner sep=1pt,minimum
\def\er{Erd\"{o}s-R\'{e}nyi }
\title{On the Analysis of a Label Propagation Algorithm for
Community Detection\thanks{This work was done when the first author (KK) was
visiting The University of Iowa on an Indo-US Science and Technology Forum
Fellowship. The work of the second author (SP) was partially supported by
National Science Foundation grant CCF 0915543.}}
\author{Kishore Kothapalli\inst{1}  \and Sriram V. Pemmaraju\inst{2} \and Vivek
Sardeshmukh\inst{2}}
\institute{ International Institute of Information Technology, Hyderabad, India
500
032. \email{kkishore@iiit.ac.in} \and Department of Computer Science, The
University of Iowa, Iowa City, IA
52242-1419, USA. \email{firstname-lastname@uiowa.edu}}
\begin{document}
\maketitle 

\begin{abstract}
 This paper initiates formal analysis of a simple, distributed algorithm
for community detection on networks. We analyze an algorithm that we call
\textsc{Max-LPA}, both in terms of its convergence time and in terms of the
``quality'' of the communities detected. \textsc{Max-LPA} is an instance
of a class of community detection algorithms called \textit{label propagation}
algorithms. As far as we know, most analysis of label propagation algorithms
thus far has been empirical in nature and in this paper we seek a theoretical
understanding of label propagation algorithms.
In our main result, we define a clustered version of \er random graphs
with clusters $V_1, V_2, \ldots, V_k$ where the probability $p$, of an
edge connecting nodes within a cluster $V_i$ is higher than $p'$, the
probability of an edge connecting nodes in distinct clusters. We show
that even with fairly general restrictions on $p$ and $p'$ ($p =
\Omega\left(\frac{1}{n^{1/4-\epsilon}}\right)$ for any $\epsilon > 0$, $p' = O(p^2)$, where $n$ is
the number of nodes), \textsc{Max-LPA}
detects the clusters $V_1, V_2, \ldots, V_n$ in just two rounds. Based on this
and on empirical results, we conjecture that \textsc{Max-LPA} can correctly
and quickly identify communities on clustered \er graphs even when the
clusters are much sparser, i.e., with $p = \frac{c\log n}{n}$ for some $c > 1$.
\end{abstract}

\section{Introduction}
The problem of efficiently analyzing large social networks spans several areas
in computer science.
One of the key properties of social networks is their \textit{community
structure}. A \textit{community} in a network is a group of nodes that are
``similar'' to each other and ``dissimilar'' from the rest of the network.
There has been a lot of work recently on defining,
detecting, and identifying communities in real-world
networks~\cite{girvan2002community, flake2000efficient, raghavan2007near}.
It is usually, but not always, the tendency
for vertices to be gathered into distinct groups, or communities, such
that edges between vertices in the same community are dense but inter-community
edges are sparse~\cite{newman2003structure, girvan2002community}. A
\textit{community detection} algorithm takes as input a network and outputs a
partition of the vertex set into ``communities''. 
Detecting communities 
can allow us to understand attributes of vertices from the network topology
alone. 

There are many metrics to measure the ``quality'' of the 
communities detected by a community detection algorithm. A popular 
and widely adopted metric is 
\textit{graph modularity} defined by Newman~\cite{newman2004finding}. This
measure is obtained by summing up, over
all communities of a given partition, the difference between the observed
fraction of links inside the community and the expected value of this
quantity for a null
model, that is, a random network having the same size and same degree
sequence. Other popular measures include
\textit{graph conductance}~\cite{kannan2000clusterings} and \textit{edge
betweenness}~\cite{girvan2002community}. 

The community detection problem has connections to the \textit{graph
partitioning}
problem
which has been well studied since
1970s~\cite{elsner97graphpartitioning,karger2000minimum,kernighan1970efficient,suaris1988circuits}.
Graph partitioning problems are usually modeled as combinatorial optimization 
problems and this approach requires a precise sense of the objective function
being optimized.
Sometimes additional criteria such as the number of parts or
the sizes of parts also need to be specified.
In contrast, the notion of communities is relatively ``fuzzy'' 
\cite{fortunato2010community} and 
changes from application to application.
Furthermore, researchers in social network analysis are reluctant to
over-specify properties of communities and would rather let algorithms
``discover'' communities in the given network.
For a survey of the different approaches that have been proposed to 
find community structure in networks, see Fortunato's work 
\cite{fortunato2010community}. 

The focus of this paper is a class of seemingly simple community detection
algorithms called \textit{label propagation} algorithms (LPA).
Raghavan et al.~\cite{raghavan2007near} seem to be the first to study label
propagation algorithms for detecting network communities. 
The advantage of a LPA is, in addition to its simplicity, the fact that 
it can be easily parallelized or distributed. The generic LPA works as follows:
initially each node in the network is assigned a unique label.
In each iteration every node updates its label to the label which is the most
frequent in its neighborhood; ties are broken randomly.
One obtains variants of LPA by varying how the initial label assignment
is made, how ties are broken, and whether a node includes itself in
computing the most frequent label in its neighborhood.
In this paper, we analyze a specific instance of LPA called
\textsc{Max-LPA} in which nodes are assigned initial labels uniformly
at random from some large enough space.
Also, if there is a tie, it is broken in favor of the larger label.
Finally, a node includes its own label in determining the most
frequent label in its neighborhood.

At any point during the execution of a LPA, a community is simply all nodes 
with the same label. 
The intuition behind using a LPA for community detection is that 
a single label (the maximum label in the case of \textsc{Max-LPA}) can quickly 
become the most frequent label in neighborhoods within a dense cluster 
whereas labels have trouble ``traveling'' across a sparse set of edges
that might connect two dense clusters.
A LPA is said to have \textit{converged} if it starts
cycling through a collection of states.
Ideally, we would like LPA to converge to a cycle of period one, i.e., 
to a state in which any further execution of LPA yields the same state.
However, this is not always possible.
In fact, part of the difficulty of
analyzing LPA stems from the randomized tie-breaking rule. 
This way of breaking ties makes it difficult to estimate the period of the
cycle that the algorithm eventually converges to.
The version of LPA that we analyze, namely \textsc{Max-LPA}, does
not suffer from this problem because
Poljak and S\r{u}ra~\cite{poljak1983periodical} have shown in a 
different context that \textsc{Max-LPA} converges to a cycle of 
period 1 or~2.

Despite the simplicity of LPA, there has been
very little formal analysis of either the convergence time of LPA or the quality of
communities produced by it. There have been papers~\cite{raghavan2007near,
leung2008towards, cordasco2010community} that provide some empirical results
about LPAs. For example, the number of iterations of label updates required
for the correct convergence of LPA is around 5~\cite{raghavan2007near},  but it
is hard to
derive any fundamental conclusions about LPA's behavior, 
even on specific families of networks, from these empirical results. 
One reason for this state of affairs is that despite its simplicity, even on simple networks, LPA can have
complicated behavior, not unlike epidemic processes that model the spread of
disease in a networked population~\cite{newman2002disease}.
Our goal in this paper is to initiate a systematic analysis of the behavior
of \textsc{Max-LPA}, both in terms of its convergence time and in terms
of the ``quality'' of communities produced.

Watts and Strogatz \cite{watts1998model} have pointed out that 
the classical \er model of random graphs differs from real-world 
social, technological, and biological networks in several critical ways.
Following this, a variety of other random graph models have
been considered as models of real-world networks.
These include the \textit{configuration} model
\cite{molloy1995critical,bender1978config},
the \textit{Watts-Strogatz} model \cite{watts1998model}, 
\textit{preferential attachment} models \cite{barabasi2002model}, etc.
(for definitions and more examples, see~\cite{newman2010networks}).
There is no empirical study or formal analysis of LPAs on these
classes of networks. 
As our first step towards developing analysis techniques for LPAs we define 
a clustered version of \er random graphs and present a formal proof of 
the running times of LPAs on these networks. 
We realize that \er networks and even clustered \er networks are 
inadequate models of real world networks, but believe that our analysis 
techniques could be useful in general.

The variants of LPA can naturally be viewed as \textit{distributed} 
algorithms, meaning each node only has \textit{local}
knowledge, i.e., knowledge of its label and the labels of its 
neighbors obtained by means of
message passing along edges of the networks. Distributed algorithms
are generally classified as \textit{synchronous} or \textit{asynchronous}
algorithms. (The reader is referred to standard books
(e.g.,~\cite{peleg2000distributed}) for a full exposition of these
terms). 
Here we analyze a synchronous version of \textsc{Max-LPA}.
The algorithm proceeds in
rounds  and in each round each node sends its label to all neighbors and then
updates its label based on the labels received from neighbors and its own 
label.

\subsection{Preliminaries} \label{sub:pre}
We use $G = (V,E)$ to denote an undirected connected graph (network) of size
$n=|V|$.
For $v \in V$, we denote by $N(v) = \{u : u \in V, (u,v) \in E\}$ the neighborhood
of $v$ in graph $G$, by $deg(v) = |N(v)|$ the degree of $v$, and by
$\Delta(G)=\max_{v\in V} deg(v)$ the maximum degree over all the
vertices in $G$.
A \textit{$k$-hop neighborhood} ($k \geqslant 1$) of $v$ is defined as $N_k(v) = \{w : \mbox{dist}_G(w, v) \le k\} \setminus \{v\}$. 
We denote the \textit{closed neighborhood} (respectively,
\textit{closed $k$-hop neighborhood}) of $v$ as $N'(v) = N(v) \cup \{v\}$
(respectively, $N'_k(v) = N_k(v) \cup \{v\}$).

Denote by $\ell_u(t)$ the label of node $u$ just before round $t$.
When the round number is clear from the context, we
use $\ell_u$ to denote the current label of $u$. 
Since the number of labels in the network is finite,
LPA will behave periodically starting in some round $t^*$, i.e., 
for some $p \ge 1$, $0 \le i < p$, and $j = 0, 1, 2, \ldots$,
$$\ell_u(t^*+ i)=\ell_u(t^* + i + j \cdot p)$$ 
for all $u \in V$. 
Then we say that \textsc{Max-LPA} has \textit{converged} in $t^*$ rounds. 

We now describe \textsc{Max-LPA} precisely (see \textbf{Algorithm 1}). 
Every node $v \in V$ is assigned a unique label uniformly and independently
at random. 
For concreteness, we assume that these labels come from the range $[0, 1]$.
At the start of a round, each node sends its label to all neighboring nodes. 
After receiving labels from all neighbors, a node $v$ updates its label as:
\begin{equation} \label{eqn:label}
l_v  \leftarrow \max\left\{\ell \mid \sum_{u \in N'(v)}[\ell_u== \ell] \ge \sum_{u \in N'(v)}[\ell_u== \ell']\mbox{ for all $\ell'$}\right\},
\end{equation}
where $[\ell_u==\ell]$ evaluates to 1 if $\ell_u=\ell$,
otherwise evaluates to 0.
Note that there is no randomness in the algorithm after the initial
assignments of labels. 
\begin{algorithm}[t]
\caption{\textsc{Max-LPA} on a node $v$} \label{algo:lpa}
\begin{algorithmic}
 \STATE $i=0$
 \STATE $l_v[i] \leftarrow$ random(0,1)
 \WHILE{true}\label{algo:lpa:while}
   \STATE $i++$;
   \STATE send $l_v[i-1]$ to $\forall u \in N(v)$
   \STATE receive $l_u[i-1]$ from $\forall u \in N(v)$
   \STATE $l_v[i]  \leftarrow \max\left\{\ell \mid \sum_{u \in
N'(v)}[\ell_u[i-1]== \ell] \ge \sum_{u \in N'(v)}[\ell_u[i-1]== \ell']\mbox{ for
all $\ell'$}\right\}$
 \ENDWHILE
\end{algorithmic}
\end{algorithm}


By ``w.h.p.'' (with high
probability)
we mean with probability at least $1-\frac{1}{n^c}$ for some constant
$c\geqslant1$. 
In this paper we repeatedly use the following versions of a tail
bound on the probability
distribution of a random variable, due to Chernoff and Hoeffding
\cite{chernoff1952measure, hoeffding1963probability}.
Let $X_1, X_2, \ldots, X_m$ be independent and identically distributed binary
random variables.
Let $X = \sum_{i = 1}^m X_i$.
Then, for any $0 \le \epsilon \le 1$ and $c \geqslant 1$,
\begin{align}
\label{eqn:chernoff1}
  \Pr\left[X > (1 + \epsilon) \cdot E[X] \right] & \le 
\exp\left(-\frac{\epsilon^2 E[X]}{3}\right)\\
\label{eqn:chernoff3}
  \Pr\left[X < (1 - \epsilon)\cdot E[X]\right]  & \le 
\exp \left(-\frac{\epsilon^2 E[X]}{2}\right) \\
\label{eqn:chernoff2} 
  \Pr\left[|X - E[X]| > \sqrt{ 3c \cdot E[X] \cdot \log n}\right] & \le 
\frac{1}{n^c} 
\end{align}

\subsection{Results} \label{sub:result}
As mentioned earlier, the purpose of this paper is to counterbalance the predominantly
empirical line of research on LPA and initiate a systematic analysis of \textsc{Max-LPA}.
Our main results can be summarized as follows:
\begin{itemize}
\item As a ``warm-up'' we prove (Section~\ref{sec:path}) 
that when executed on an $n$-node path \textsc{Max-LPA} 
converges to a cycle of period one in $\Theta(\log n)$ rounds w.h.p.
Moreover, we show that w.h.p. the state that \textsc{Max-LPA} converges to 
has $\Omega(n)$ communities.

\item %
In our main result (Section~\ref{sec:er}), we define a class of random graphs
that we call \textit{clustered \er graphs}. 
A clustered \er graph $G = (V, E)$ comes with a node partition $\Pi = (V_1, V_2, \ldots, V_k)$
and pairs of nodes in each $V_i$ are connected with probability $p_i$ and pairs 
of nodes in distinct parts in $\Pi$ are connected with probability $p' < \min_i \{p_i\}$.
Since $p'$ is small relative to any of the $p_i$'s, one might view a clustered \er graph
as having a natural community structure given by $\Pi$.
We prove that even with fairly general restrictions on the $p_i$'s and $p'$ and
on the sizes
of the $V_i$'s, \textsc{Max-LPA} converges to a period-1 cycle in just 2 rounds, w.h.p.
\textit{and} ``correctly'' identifies $\Pi$ as the community structure of $G$.

\item Roughly speaking, the above result requires each $p_i$ to be 
$\Omega\left(\left(\frac{\log n}{n}\right)^{1/4}\right)$.
We believe that \textsc{Max-LPA} would correctly and quickly identify $\Pi$
as the community structure of a given clustered \er graph even when the
$p_i$'s are much smaller, e.g. even when $p_i = \frac{c \log n}{n}$ for $c > 1$.
However, at this point our analysis techniques do not seem adequate for situations with
smaller $p_i$ values and so we provide empirical evidence (Section~\ref{sec:erp}) 
for our conjecture that \textsc{Max-LPA} correctly converges to $\Pi$ in
$O(\mbox{polylog}(n))$ rounds even when $p_i = \frac{c \log n}{n}$ for some 
$c > 1$ and $p'$ is just a logarithmic factor smaller than $p_i$.
\end{itemize}

\subsection{Related Work}\label{sub:relwork}
There are several variants of LPA presented in the
literature~\cite{cordasco2010community, gregory2009finding,
subelj2011unfolding, liu2009bipartite}. Most of these are concerned about
``quality'' of the output and present 
empirical studies of output produced by LPA. 

Raghavan et al.~\cite{raghavan2007near}, based on the experiments, claimed that
at least 95\% of the nodes are classified correctly by the end of 5 rounds of label updates. But the
experiments that they carried out were on the small networks. 

Cordasco and Gargano~\cite{cordasco2010community} proposed a semi-synchronous
approach which is guaranteed to converge without oscillations and can be parallelized. 
They provided a formal proof of convergence but did bound the 
running time of the algorithm. Lui and
Murata~\cite{liu2009bipartite} presented a variation of LPA for
bipartite networks which converges but no formal proof is provided, 
neither for the convergence nor for the running time. 

Leung et al.~\cite{leung2008towards} presented
empirical analysis of quality of output produced by LPA on larger data sets.
From experimental results on a special structured network they claimed that 
running time of LPA is  $O(\log n)$.

\section{Analysis of \textsc{Max-LPA} on a Path} \label{sec:path}
Consider a path $\mathcal{P}_n$ consisting of vertices
$V = [n]$ and edge set $E=\{(i, i+1) \mid 1 \le i < n\}$.
In this section, we analyze the execution of \textsc{Max-LPA} on a path network 
$\mathcal{P}_n$ and prove that in $O(\log n)$ rounds \textsc{Max-LPA}
converges to a state from which no further label updates occur and
furthermore in such a state the number of communities is
$\Omega(n)$ w.h.p..

\begin{lemma}
\label{lemma:periodOne}
When \textsc{Max-LPA} is executed on path network $\mathcal{P}_n$, independent
of 
the initial label assignment, it will converge
to a state from which no further label updates occur.
\end{lemma}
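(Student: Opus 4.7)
The plan is to combine the cited Poljak--S\r{u}ra result with an endpoint-driven stability argument. By \cite{poljak1983periodical}, \textsc{Max-LPA} on any graph reaches a cycle of period one or two, so it suffices to rule out period-two cycles on the path $\mathcal{P}_n$. Throughout, I will work in the regime where the algorithm has already entered its terminal cycle, so every vertex either has a constant label in the cycle or changes label in every single round of the cycle (there is no intermediate behavior in a period-two cycle).

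The starting point is an endpoint monotonicity observation. For endpoint $v_1$, the closed neighborhood $N'(v_1)=\{v_1,v_2\}$ has only two labels, so the update rule reduces to $\ell_{v_1}(t+1)=\max(\ell_{v_1}(t),\ell_{v_2}(t))$: when the two labels agree the count is two and $v_1$ keeps it; when they differ each has count one and the max wins. Hence $\ell_{v_1}(t)$ is nondecreasing in $t$, and since labels take only finitely many values (each new label is an existing label from some $N'$), the sequence stabilizes. The same holds for $v_n$. Therefore, within any period-two terminal cycle, the labels of $v_1$ and $v_n$ are constant.

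The next step is to propagate this constancy inward. Let $D\subseteq V$ denote the set of vertices whose label actually oscillates in the cycle; by the above $v_1,v_n\notin D$. Suppose for contradiction that $D\neq\emptyset$, and let $v$ be the leftmost vertex in $D$. Then $v$ is interior and its left neighbor $v-1$ has a constant label, say $\gamma$. Write $a,b$ for the two labels that $v$ alternates between (with $a\neq b$), and $x,y$ for the labels of $v+1$ in the two cycle states. Unfolding the update rule $\ell_v(t+1)=f(\ell_{v-1}(t),\ell_v(t),\ell_{v+1}(t))$ at both cycle states gives the pair of equations $f(\gamma,a,x)=b$ and $f(\gamma,b,y)=a$. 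A short case analysis on $f$ (three arguments all distinct, versus exactly two equal, versus all equal) shows that to force $f(\gamma,a,x)\neq a$ the only possibility is $\gamma=x\neq a$ with value $b=\gamma$; symmetrically $\gamma=y\neq b$ with value $a=\gamma$. Combining, $a=\gamma=b$, contradicting $a\neq b$. Hence $v$ cannot oscillate, so $D=\emptyset$ and the cycle has period one.

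The main obstacle is making the case analysis airtight when $v+1$ also happens to be in $D$, so that $x\neq y$ is itself governed by an update equation. The clean way to handle this is exactly as above, by choosing $v$ to be the \emph{leftmost} element of $D$: this ensures $\ell_{v-1}$ is known to be constant, which is enough to pin down what $f$ can do regardless of whether $v+1$ is in $D$ or not. A minor bookkeeping point is to note that if $v=n$ (the degenerate case where $D$ contains only the right endpoint), the same argument applies with the roles of left and right swapped, but this case is already excluded because $v_n\notin D$. Together these steps reduce the lemma to the Poljak--S\r{u}ra dichotomy and a purely local contradiction, with no estimates or probabilistic machinery needed since the statement is for arbitrary initial label assignments.
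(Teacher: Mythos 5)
Your overall strategy (the Poljak--S\r{u}ra period-1-or-2 dichotomy, endpoint monotonicity to pin the endpoints, then a local contradiction at the leftmost oscillating vertex) is sound and genuinely different from the paper's, but the pivotal case analysis is misstated, and the contradiction you derive from it does not cover all cases. You claim that $f(\gamma,a,x)\neq a$ forces $\gamma=x\neq a$ with output $b=\gamma$. That is false: when $\gamma$, $a$, $x$ are pairwise distinct, every label has multiplicity one in $N'(v)$, so the tie-break returns $\max(\gamma,a,x)$, and whenever $a<\max(\gamma,x)$ the output is $b=\max(\gamma,x)\neq a$ even though $\gamma\neq x$ (for instance $\gamma=0.9$, $a=0.3$, $x=0.5$ gives $b=0.9$). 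Hence your conclusion ``$a=\gamma=b$'' is reached only in one of the admissible cases, and as written the argument does not rule out a period-two cycle.

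The gap is reparable, because the correct enumeration still closes the argument. From $f(\gamma,a,x)=b\neq a$ the only possibilities are (1) $\gamma=x\neq a$ and $b=\gamma$, or (2) $\gamma,a,x$ pairwise distinct and $b=\max(\gamma,x)>a$; symmetrically, $f(\gamma,b,y)=a\neq b$ gives (1') $\gamma=y\neq b$ and $a=\gamma$, or (2') $\gamma,b,y$ pairwise distinct and $a=\max(\gamma,y)>b$. All four combinations are contradictory: (1)+(1') yields $a=\gamma=b$; (1)+(2') contradicts $\gamma\neq b$; (2)+(1') contradicts $\gamma\neq a$; (2)+(2') yields $b>a$ and $a>b$. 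With that correction your proof goes through, and it is more purely local than the paper's: the paper first proves the invariant that at all times each label class induces a single connected subpath of $\mathcal{P}_n$, and then argues that an oscillating vertex could acquire each of its two labels only by tie-breaking, forcing $\ell<\ell'$ and $\ell'<\ell$ simultaneously. Your endpoint-monotonicity plus leftmost-oscillator device replaces that global invariant, but only if the local bookkeeping is carried out in full.
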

 \begin{proof}
 First we show that at any point in the execution of \textsc{Max-LPA}, the
 subgraph of $\mathcal{P}_n$
 induced by all nodes with the same label, is a single connected component.
 This is true before the first round since the initial label assignment assigns
 distinct labels to the nodes.
 Suppose the claim is true just before round $t$.
 Let $S = (i, i+1, \ldots, j)$ be the subgraph of $\mathcal{P}_n$ consisting of
 nodes with label $\ell$, just before round $t$ of \textsc{Max-LPA}.
 \begin{itemize}
 \item If $S$ contains two or more nodes then none of the nodes in $S$ will ever
 change
 their label.
 Moreover, the only other nodes that can acquire label $\ell$ in round $t$ are
 nodes 
 $i-1$ and $j+1$.
 Hence, after round $t$, the set of nodes with label $\ell$ still induces a
 single 
 connected component.
 
 \item If $S$ contains a single node, say $i$, then the only way in which 
 label $\ell$ might induce multiple connected components after round $t$
 would be if in round $t$: (i) node $i-1$ acquires label $\ell$,
 (ii) node $i+1$ acquires label $\ell$, and (iii) node $i$ changes its label to
 some $\ell' \not= \ell$.
 (i) and (ii) above can only happen if $\ell$ is larger than the labels of nodes
 $i-1$ and $i+1$ just
 before round $t$. But, if this were true, then node $i$ would not change its
 label 
 in round $t$.
 \end{itemize}
 Hence, in either case the nodes with label $\ell$ would induce a connected
 component.
 
 According to Poljak and S\r{u}ra~\cite{poljak1983periodical}, \textsc{Max-LPA}
 has a period of 1 or 2 on
 any network with any initial label assignment.
 To obtain a contradiction we suppose that \textsc{Max-LPA} has a period of 2
 when 
 executed on $\mathcal{P}_n$ for some $n$ and some initial label 
 assignment. 
 Therefore for some $v \in V$ and some time $t$, $\ell_v(t + 2i) = \ell$
 and $\ell_v(t + 2i + 1) = \ell'$ for $\ell \not= \ell'$ and all $i = 0,
 1, 2, \ldots$.
 For $v$ to change its label from $\ell$ to $\ell'$ in a round it must
 be the case that $\ell < \ell'$.
 This is because $v$ cannot have two neighbors with label $\ell'$ since $\ell'$
 can only induce one connected component.
 Hence, $v$ acquires the new label $\ell'$ by tie breaking.
 By a symmetric argument, for $v$ to change its label from $\ell'$ to
 $\ell$ in the next round, it must be the case that $\ell' < \ell$.
 Both conditions cannot be met and we have a contradiction.
 \qed
 \end{proof}

\begin{definition}\label{def:local}
A node $v$ is said to be \textit{$k$-hop maxima} if its label $\ell_v$ 
is (strictly) greater than the labels of all nodes in its $k$-neighborhood.
As a short form, we will use \textit{local maxima} to refer to any node
that is a 1-hop maxima.
\end{definition}

Let $M = \{i_1, i_2,\ldots, i_r\}$, $i_1 < i_2 \cdots < i_r$ be the set of 
nodes which are 2-hop maxima in $\mathcal{P}_n$ for the given initial 
label assignment. 
For any $1 \le j < r$, nodes $i_j$ and $i_{j+1}$ are said to be 
\textit{consecutive} 2-hop maxima.

\begin{lemma}
\label{lemma:communities}
When \textsc{Max-LPA} converges, the number of communities it identifies is
bounded
below by the number of 2-hop maxima in the initial label assignment.
\end{lemma}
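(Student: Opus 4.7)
The plan is to show that every 2-hop maximum in the initial assignment gives rise to a distinct community that persists for the entire execution of \textsc{Max-LPA}, so that distinct 2-hop maxima produce distinct communities at convergence.

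First, I would verify the following key fact: if $v$ is a 2-hop maximum in the initial assignment, then after round 1 of \textsc{Max-LPA}, both $v-1$, $v$, and $v+1$ hold the label $\ell_v$. Since all initial labels are distinct, for each of the nodes $u \in \{v-1, v, v+1\}$, the labels in $N'(u)$ are three (or two, at the boundary) pairwise distinct values, so the ``most frequent'' label in $N'(u)$ is simply $\max\{\ell_w : w \in N'(u)\}$. The 2-hop maximum condition guarantees $\ell_v$ exceeds every label appearing in $N'(v-1) \cup N'(v) \cup N'(v+1)$, so each of $v-1$, $v$, $v+1$ adopts $\ell_v$ in round 1.

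Second, I would invoke the connectivity argument already established in the proof of Lemma~\ref{lemma:periodOne}: at every round, the set of nodes carrying a common label $\ell$ induces a connected subpath, and whenever this subpath has at least two nodes, none of those nodes ever changes its label. Applying this starting from the end of round 1 to the label $\ell_v$ shows that the set of nodes carrying $\ell_v$ is a single contiguous subpath containing $v$ of size at least 3 that can only grow; in particular, $v$ retains $\ell_v$ forever, and $\ell_v$ labels exactly one community at convergence.

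Finally, since the initial labels are all distinct, distinct 2-hop maxima $i_j \neq i_{j'}$ carry distinct labels $\ell_{i_j} \neq \ell_{i_{j'}}$, and by the preceding step each of these labels survives to convergence and defines its own community. Thus the number of communities at convergence is at least $|M|$, the number of 2-hop maxima in the initial assignment. The only subtlety worth being careful about is the boundary of the path, where the 2-hop neighborhood of nodes $1, 2, n-1, n$ is truncated; but the same argument goes through verbatim in those cases since the truncation only removes constraints, not neighbors of $v$ that need to adopt $\ell_v$. I do not expect any serious obstacle here; the main content is just chaining the round-1 analysis with the invariant from Lemma~\ref{lemma:periodOne}.
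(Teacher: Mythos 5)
Your proposal is correct and follows essentially the same route as the paper: show that after round 1 each 2-hop maximum and its two neighbors carry that maximum's label, note that a label held by two or more adjacent nodes is never relinquished (the invariant from Lemma~\ref{lemma:periodOne}), and conclude that distinct 2-hop maxima yield distinct surviving communities. Your treatment is just slightly more explicit than the paper's about the persistence step and the path endpoints.
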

 \begin{proof}
 Since all initial node labels are assumed to be
 distinct, in the first round of \textsc{Max-LPA}
 every node $u \in V$ acquires a label by breaking ties. 
 Since ties are broken in favor of larger labels, all neighbors of
 each $i_j\in M$ will acquire the corresponding 2-hop maxima label
 $\ell_{i_j}$. 
 Thus after one round of \textsc{Max-LPA}, for each $i_j \in M$, there are three
 consecutive nodes in $\mathcal{P}_n$ with label $\ell_{i_j}$.
 None of these nodes will change their label in future rounds and
 hence there will be a community induced by label $\ell_{i_j}$
 when \textsc{Max-LPA} converges.
 \qed
 \end{proof}

\begin{lemma}
\label{lemma:convergence}
Let $D$ be the maximum distance in $\mathcal{P}_n$ between a pair of consecutive
nodes in $M$.
Then the number of rounds that \textsc{Max-LPA} takes to converge is bounded
above by $D + 2$.
\end{lemma}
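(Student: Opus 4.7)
The plan is to extend Lemma~\ref{lemma:communities}'s analysis by tracking how the settled ``cores'' established around each 2-hop maximum in round~1 grow until they cover the entire path. I would begin by recording three preparatory facts: (a)~by Lemma~\ref{lemma:communities}'s proof, after round~1 each $i_j \in M$ anchors a cluster of at least two consecutive nodes labelled $\ell_{i_j}$ (three for interior $i_j$); (b)~any cluster of two or more consecutive same-label nodes is \emph{settled}, since each such node has at least two of its three $N'$-neighbors carrying its label and the majority rule then pins it forever; and (c)~by the connectedness of each label's support established inside Lemma~\ref{lemma:periodOne}'s proof, the ``pincer'' configuration $(\alpha,\beta,\alpha)$ with $\alpha \ne \beta$ cannot occur on any $N'(v)$ at any time. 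A direct consequence of~(c), together with the update rule, is that the sequence $\{\ell_v(t)\}_t$ is monotone non-decreasing: the only admissible updates at $v$ are to keep $\ell_v$ or to jump to the maximum of three distinct values.

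For the main step I would fix any two consecutive 2-hop maxima $i_j, i_{j+1}$ at distance $d \le D$ and argue by induction on $s \ge 1$ that after round $s+1$ every node within distance $s$ of $i_j$ or $i_{j+1}$ has stabilized. The base case $s=1$ is exactly~(a). For the inductive step, let $v$ be the first unsettled node adjacent to the current settled region of a cluster with label $\ell$. By~(c) the three $N'(v)$-labels are distinct, so $v$'s next label equals their maximum. If that maximum equals $\ell$, then $v$ joins the cluster, which remains settled by~(b). Otherwise $v$'s new label strictly exceeds both its neighbors' labels, and by~(b) and monotonicity one more round suffices for at least one of $v$'s neighbors to copy this larger label, producing a new settled $2$-cluster anchored at $v$ that extends the settled region by at least one position. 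Iterating until the two advancing settled regions from $i_j$ and $i_{j+1}$ meet inside the gap, and adding one round for the final absorption, the gap is fully settled by round $d + 1 \le D + 1$, and hence within $D + 2$ rounds in total across the whole path.

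The main technical obstacle I anticipate is the ``$v$ creates a new frontier'' subcase: one must verify rigorously that $v$'s newly acquired strictly larger label is copied by one of its neighbors in the very next round, yielding a settled $2$-cluster that then behaves as a proper expanding frontier rather than oscillating. Threading the pincer-exclusion of~(c) together with the monotonicity consequence is what makes this bookkeeping work, and it accounts for the small slack that keeps the proven bound at $D + 2$ rather than something tighter.
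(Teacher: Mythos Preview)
Your inductive claim --- that after round $s+1$ every node within distance $s$ of $i_j$ or of $i_{j+1}$ is stabilized --- is false, and the error is in the ``otherwise'' branch of the inductive step. You split into two cases at the frontier node $v$: either the maximum of $N'(v)$ is the settled neighbour's label $\ell$, or ``$v$'s new label strictly exceeds both its neighbors' labels.'' But there is a third possibility you have suppressed: the maximum of $N'(v)$ can be the \emph{unsettled} neighbour's label $\ell_w$. In that case $v$'s new label equals $\ell_w$, which does \emph{not} strictly exceed the right neighbour's label, and in the same round $w$ may jump to something still larger coming from further right. Then $v$ is again isolated after the round, and nothing in your argument forces it to settle in the next one or two rounds. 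Concretely, take the path with initial labels $100,\,1,\,2,\,200,\,300,\,400,\,500,\,600,\,700,\,800,\,10000$ at positions $0,\ldots,10$; here $M=\{0,10\}$. After round~1 the labels are $100,100,200,300,\ldots,800,10000,10000$, and position~2 (distance~$2$ from $i_j=0$) has distinct neighbours $100$ and $300$. In each subsequent round position~2 copies its right neighbour's label while that neighbour simultaneously moves up, so position~2 remains isolated until round~8. Your claim would require it to be settled by round~3.

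The paper's proof avoids this by tracking the \emph{shrinkage of the unsettled region} rather than the growth of the settled cores. It calls a node \emph{isolated} if its label differs from both neighbours, observes that after round~1 every connected component of isolated nodes has size at most $D$, and then shows that any such component of size at least two loses at least one node per round (by looking at the node of maximum label inside the component). The point is that the component can shrink from either end; in the example above it shrinks entirely from the right while the left frontier at position~2 stays put. Your growth-from-both-anchors picture cannot capture this asymmetry, which is why the induction breaks.
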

 \begin{proof}
 Call a node $v$ \textit{isolated} if its label is distinct from the labels of
 its neighbors.
 After the first round of \textsc{Max-LPA} each node $i_j \in M$ and its
 neighbors 
 acquire label $\ell_{i_j}$.
 Therefore, after the first round, every connected component of the graph
induced
 by isolated nodes has size bounded above by $D$.
 We now show that in each subsequent round, the size of every connected
component
 of size two or more will reduce by at least one.
 Let $S$ be a component in the graph induced by isolated nodes, just before
round
 $t$.
 Let $i$ be the node with maximum label in $S$. Since $S$ contains at least two
 nodes,
 without loss of generality suppose that $i+1$ is also in $S$.
 In round $t$, node $i$ could acquire the label of a node outside $S$.
 If this happens $i$ would cease to be isolated after round $t$.
 Similarly, in round $t$, node $i+1$ could acquire the label of a node outside
 $S$ and would therefore
 cease to be isolated after round $t$.
 If neither of these happens in round $t$, then node $i+1$ will acquire the
label
 of node $i$ in round $t$
 and node $i$ will not change its label.
 In this case, both $i$ and $i+1$ will cease to be isolated nodes after round
 $t$.
 In any case, we see that the size of the component $S$ has shrunk by at least
 one
 in round $t$.
 Thus in $D+1$ rounds $\mathcal{P}_n$ we will reach a state in which all
 components in
 the graph induced by isolated nodes have size one.
 Isolated nodes whose labels are larger than the labels of neighbors will make
no
 further
 label updates.
 The remaining isolated nodes will disappear in one more round.
 \qed
 \end{proof}

\begin{theorem} \label{theorem:path}
When \textsc{Max-LPA} is executed on a path $\mathcal{P}_n$, 
it converges to a state from which no further label updates occur in 
$O(\log n)$ rounds w.h.p.
Furthermore, in such a state there are $\Omega(n)$ communities.
\end{theorem}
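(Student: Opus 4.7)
The plan is to combine the three preceding lemmas with a simple probabilistic analysis of the 2-hop maxima of a uniformly random labeling of $\mathcal{P}_n$. By Lemma~\ref{lemma:convergence} it suffices to show that the maximum gap $D$ between consecutive elements of $M$ is $O(\log n)$ w.h.p., and by Lemma~\ref{lemma:communities} it suffices to show that $|M| = \Omega(n)$ w.h.p.\ to obtain the $\Omega(n)$ communities bound. Both bounds follow from estimates on the random set $M$.

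The key technical device is a decoupling trick to handle the correlation between the events ``$i \in M$'' for nearby $i$. A node $i$ with $3 \le i \le n-2$ is a 2-hop maximum iff $\ell_i$ is the strict maximum among the five i.i.d.\ continuous random variables $\ell_{i-2}, \ell_{i-1}, \ell_i, \ell_{i+1}, \ell_{i+2}$, an event of probability exactly $1/5$. If we restrict attention to the subset of positions $S = \{3, 8, 13, \ldots\}$, whose pairwise distance is at least $5$, then the 5-neighborhoods used to define ``is a 2-hop max'' are pairwise disjoint, so the indicators $\{X_i : i \in S\}$ with $X_i = [i \in M]$ are mutually independent Bernoulli$(1/5)$ random variables.

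For the $\Omega(n)$ communities bound, I apply the lower-tail Chernoff bound (\ref{eqn:chernoff3}) to $\sum_{i \in S} X_i$, whose expectation is $|S|/5 = \Theta(n)$, to conclude $|M| \ge \sum_{i \in S} X_i = \Omega(n)$ w.h.p. For the convergence-time bound, fix a constant $c$ and consider any window $W$ of $c \log n$ consecutive positions in $\mathcal{P}_n$. The window contains at least $\lfloor c \log n / 5 \rfloor - 1$ positions from $S$, and the probability that none of these sampled positions lies in $M$ is at most $(4/5)^{\Omega(\log n)}$, which is $O(1/n^2)$ for $c$ chosen large enough. A union bound over the at most $n$ possible starting positions of $W$ shows that every such window intersects $M$ w.h.p., hence $D \le c \log n$ w.h.p. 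Combining with Lemma~\ref{lemma:convergence} gives convergence in $D + 2 = O(\log n)$ rounds.

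The main obstacle is exactly the dependence between the indicators $X_i$ for consecutive $i$; without decoupling one cannot directly invoke Chernoff. The spaced-sampling argument above cleanly sidesteps this while losing only a constant factor, which is enough for both claims. Boundary effects from the first and last two nodes of $\mathcal{P}_n$ are absorbed into constants and do not affect the asymptotics.
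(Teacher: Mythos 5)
Your proposal is correct and follows essentially the same route as the paper's proof: sampling 2-hop maxima at positions spaced five apart (the paper's ``segment centers'') to get independent Bernoulli$(1/5)$ indicators, applying the Chernoff bound (\ref{eqn:chernoff3}) for the $\Omega(n)$ count, and using the $(4/5)^{\Theta(\log n)}$ estimate plus a union bound over windows to bound the maximum gap $D$ by $O(\log n)$, before invoking Lemmas~\ref{lemma:communities} and~\ref{lemma:convergence}. No gaps beyond the boundary issues the paper itself also glosses over.
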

\begin{proof}
Partition $\mathcal{P}_n$ into ``segments'' of 5 nodes each. 
Let $S$ denote the set of center nodes of these segments.
The probability that a node in $\mathcal{P}_n$ is a 2-hop 
maxima is $\frac{1}{5}$.
Therefore the expected number of nodes in $S$ that end up being 2-hop
maxima is $n/25$.
Now note that for any two nodes $i, j \in S$, node $i$ being a 2-hop maxima
is independent of node $j$ being a 2-hop maxima due to the fact that there are
at least 4 nodes between $i$ and $j$.
Therefore, we can apply the lower tail Chernoff bound (\ref{eqn:chernoff3}) to
conclude 
that w.h.p. at least $n/50$ nodes in $\mathcal{P}_n$ are
2-hop maxima.
Combining this with Lemma~\ref{lemma:communities} tell us that 
when \textsc{Max-LPA} converges, it does so to a state in which 
there are at least $n/50$ communities with high probability.

Now consider a contiguous sequence of $k$ 5-node segments.
The probability that none of the centers of the $k$ segments are 2-hop maxima
is $(4/5)^k$.
Note that here we use the independence of different segment centers becoming
2-hop maxima.
Hence, for a large enough constant $c$, the probability that none of the 
centers of $k = c \log n$ consecutive segments are 2-hop maxima is at most
$1/n^2$.
Using the union bound and observing that there at most $n$ consecutive
segment sequences of length $k$, we see that the probability that there is
a sequence of $k = c \log n$ consecutive segments, none of whose centers
are 2-hop maxima, is at most $1/n$.
Therefore, with probability at least $1 - 1/n$ every sequence of $k = 
c \log n$ consecutive segments contains a segment whose center is a 
2-hop maxima. 
It follows that the distance between consecutive 2-hop maxima is at most
$5c \log n$ with probability at least $1 - 1/n$.
The result follows by combining this with Lemma~\ref{lemma:convergence}. 
\qed
\end{proof}

The argument given here establishing a linear lower bound on the number of communities can 
be easily extended to graphs with maximum degree bounded by a constant.
The argument bounding the convergence time depended crucially on two properties of the underlying graph:
(i) degrees being bounded and (ii) number of paths of length $O(\log n)$ being 
polynomial in number. 
Thus the convergence bound can be extended to other graph classes 
satisfying these two properties (e.g., trees with bounded degree).

\section{Analysis of \textsc{Max-LPA} on Clustered \er Graphs}
\label{sec:er}
We start this section by introducing a family of ``clustered'' random
graphs that come equipped with a simple and natural notion of 
a community structure.
We then show that on these graphs \textsc{Max-LPA} detects this natural
community structure in
only 2 rounds, w.h.p. provided certain fairly general
sparsity conditions are satisfied.

\subsection{Clustered \er graphs} \label{sub:structure}
Recall that for an integer $n \ge 1$ and $0 \le p \le 1$, the \er graph
$G(n, p)$ is the random graph obtained by starting with vertex set 
$V = \{1,2,\ldots,n\}$ and connecting each pair of vertices $u, v \in V$,
independently with probability $p$.  
Let $\Pi$ denote a partition $(V_1, V_2, \ldots, V_k)$ of $V$,
let $\pi$ denote the real number sequence $(p_1, p_2, \ldots, p_k)$,
where $0 \le p_i \le 1$ for all $i$ and let $0 \le p' < \min_i \{p_i\}$.
The \textit{clustered \er} graph $G(\Pi, \pi, p')$ has vertex set
$V$ and edges obtained by independently connecting each pair of
vertices $u, v \in V$ with probability $p_i$ if $u, v \in V_i$ for some
$i$ and with probability $p'$, otherwise (see Figure~\ref{fig:erdos}).
Thus each induced subgraph $G[V_i]$ is the standard \er graph 
$G(n_i, p_i)$, where $n_i = |V_i|$.
\begin{figure}[t]
\centering
\fbox{ 
 \begin{tikzpicture}[scale=1]
  \draw (2,0) ellipse (1 and 2);
  \node[blank] at (2,-2.5) {$V_1$};
  \node[cir] at (2.2,1) (u1) {$u_1$};
  \node[cir] at (1.8,-0.3) (u2) {$u_2$};
  \node[cir] at (2.2,-1.2) (u3) {$u_{n_1}$};
  \node[blank] at (2.4,0) {$\vdots$};
  \draw (u1) -- (u2) node[above,left,midway]{$p_1$};
  \draw (u2) -- (u3) node[above,left,midway]{$p_1$};
  
  \draw (6,0) ellipse (1 and 2);
  \node[blank] at (6,-2.5) {$V_2$};
  \node[cir] at (5.6,1) (v1) {$v_1$};
  \node[cir] at (6,0) (v2) {$v_2$};
  \node[cir] at (5.8,-1) (v3) {$v_{n_2}$};
  \node[blank] at (6.5,0.5) {$\vdots$};
  \draw (v1) -- (v2) node[above,left,midway]{$p_2$};
  \draw (v2) -- (v3) node[above,left,midway]{$p_2$}; 

  \node[blank] at (8.5,0) {$\ldots$};
  \node[blank] at (8.5,-2.5) {$\ldots$};

  \draw (11,0) ellipse (1 and 2);
  \node[blank] at (11,-2.5) {$V_k$};
  \node[cir] at (11.2,1) (w1) {$w_1$};
  \node[cir] at (10.7,-0.4) (w2) {$w_2$};
  \node[cir] at (11.2,-1.2) (w3) {$w_{n_k}$};
  \node[blank] at (10.4,0.3) {$\vdots$};
  \draw (w1) -- (w3) node[above,right,midway]{$p_k$};
  \draw (w2) -- (w3) node[above,left,midway]{$p_k$}; 

   \draw (u1) -- (v2) node[below=5pt,left,midway]{$p'$};
 \draw (v3) -- (w2) node[below=5pt,right,midway]{$p'$};
\end{tikzpicture}
 }
\caption{The clustered \er graph. We connect two nodes in the $i$-th ellipse 
(i.e., $V_i$) with probability $p_i$ and nodes from different ellipses are
connected
with probability $p'< \min_i\{p_i\}$. \label{fig:erdos}}
\end{figure}
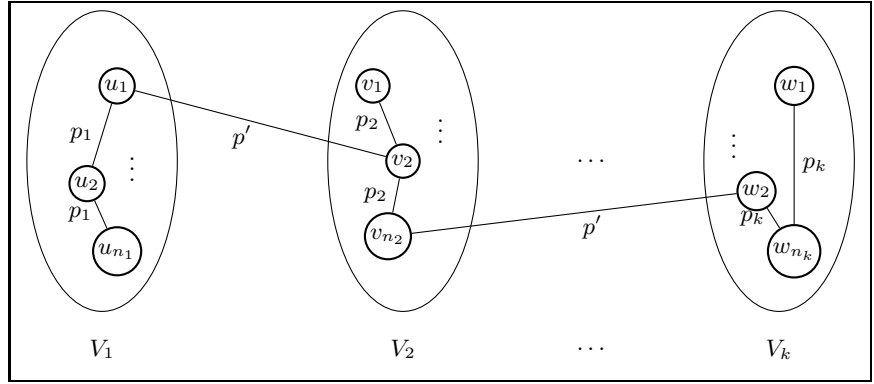

Given that $p' < p_i$ for all $i$, one might view $G(\Pi, \pi, p')$ as having a
natural
community structure given by the vertex partition $\Pi$.
Specifically, when $p'$ is much smaller than $\min_i\{p_i\}$, the
inter-community edge density is 
much less than the intra-community edge density and it may be easier
to detect the community structure $\Pi$.
On the other hand as the intra-community probabilities $p_i$ get closer to 
$p'$, it may be hard for an algorithm
such as  \textsc{Max-LPA} to identify $\Pi$ as the community structure.
Similarly, if an intra-community probability $p_i$ becomes very 
small, then the subgraph $G[V_i]$
can itself be quite sparse and independent of how small $p'$ is relative to 
$p_i$, any community detection algorithm may end up viewing each $V_i$ as
being composed of several communities.

In the rest of the section, we explore values of the $p_i$'s and $p'$ for which
\textsc{Max-LPA} ``correctly'' and quickly identifies $\Pi$ as the community
structure of
$G(\Pi, \pi, p')$.

\subsection{Analysis} 
\label{subsection:analysis}
In the following theorem we establish fairly general conditions on the
probabilities
$\{p_i\}$ and $p'$ and on the node subset sizes $\{n_i\}$ and $n$ under which
\textsc{Max-LPA} converges
correctly, i.e., to the node partition $\Pi$, w.h.p.
Furthermore, we show that under these circumstances just 2 rounds suffice for
\textsc{Max-LPA} to reach convergence!

\begin{lemma}
\label{lemma:probBound}
Let $G(\Pi, \pi, p')$ be a clustered \er graph such that $p' <
\min_i\{\frac{n_i}{n}\}$.
Let $\ell_i$ be the maximum label of a node in $V_i$.
Then for any node $v \in V_i$ the probability that $v$ is not adjacent to
a node outside $V_i$ with label higher than $\ell_i$ is at least $1/2e$.
\end{lemma}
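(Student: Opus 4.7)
The plan is to condition on the realized value $t$ of the cluster maximum $\ell_i$, use independence to compute the conditional probability cleanly, and then integrate against the density of $\ell_i$. Fix any $t \in [0,1]$ and any outside node $u \notin V_i$. The joint event ``$(v,u) \in E$ and $L_u > t$'' has probability $p'(1-t)$, since the edge (probability $p'$) is independent of the label $L_u$ (whose chance of exceeding $t$ is $1-t$). Across the $n - n_i$ outside nodes these joint events are mutually independent by independence of edges and of labels, so
\[
\Pr[v \text{ has no outside neighbor with label} > \ell_i \mid \ell_i = t] = (1 - p'(1-t))^{n - n_i}.
\]

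Since $\ell_i$ is the maximum of $n_i$ independent uniform $[0,1]$ variables, its density is $n_i t^{n_i - 1}$ on $[0,1]$, so
\[
\Pr[v \text{ has no such outside neighbor}] = \int_0^1 (1 - p'(1-t))^{n - n_i}\, n_i t^{n_i - 1}\, dt.
\]
The crux of the argument is a carefully chosen truncation: I would lower-bound this integral by restricting to the range $t \ge 1 - 1/n_i$. On this range $1 - t \le 1/n_i$, so the first factor is at least $(1 - p'/n_i)^{n - n_i}$, while the tail mass of the density equals $1 - (1 - 1/n_i)^{n_i} \ge 1 - 1/e$. This yields
\[
\Pr[\cdots] \;\ge\; (1 - 1/e)\,(1 - p'/n_i)^{n - n_i}.
\]

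To finish I would use the hypothesis $p' < n_i/n$, equivalently $p'/n_i < 1/n$, together with the standard inequality $(1 - 1/n)^{n-1} \ge 1/e$, to conclude $(1 - p'/n_i)^{n - n_i} \ge (1 - 1/n)^{n-1} \ge 1/e$. Multiplying the two factors gives $(e-1)/e^2$, which exceeds $1/(2e)$ because $e \ge 2$.

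The only point requiring any real thought is the calibration of the threshold $1 - 1/n_i$ in the truncation step: it is chosen so that both the tail weight $\Pr[\ell_i \ge 1 - 1/n_i]$ and the worst-case conditional factor $(1 - p'/n_i)^{n - n_i}$ are simultaneously at least $1/e$. A larger cutoff would drain the tail probability below $1 - 1/e$, and a smaller one would weaken the conditional factor below $1/e$; the value $1/n_i$ exactly balances these, and everything else is routine bookkeeping.
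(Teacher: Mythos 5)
Your proof is correct, but it takes a genuinely different route from the paper's. The paper argues per outside node: by exchangeability of the $n_i+1$ uniform labels, a fixed node $v'\notin V_i$ beats $\ell_i$ with probability $1/(n_i+1)$, so the event ``adjacent to $v$ and label above $\ell_i$'' has probability $p'/(n_i+1)$; it then multiplies over the $n-n_i$ outside nodes to get $\left(1-\frac{p'}{n_i+1}\right)^{n-n_i}$ and bounds this below by $\left(1-\frac{p'}{n_i}\right)^{n} > \left(1-\frac{1}{n}\right)^{n} > \frac{1}{2e}$. Strictly speaking, those outside-node events are only independent conditionally on $\ell_i$ (marginally they are positively correlated through the common threshold $\ell_i$), so the paper's product expression is really a lower bound on the true probability --- justified, e.g., by Jensen's inequality applied to the convex map $t\mapsto(1-p'(1-t))^{n-n_i}$, since $E[1-\ell_i]=1/(n_i+1)$ --- rather than the exact value it is presented as; the conclusion is unaffected. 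Your argument conditions on $\ell_i=t$, where the independence claim is exactly right, integrates against the density $n_i t^{n_i-1}$, and then truncates to $t\ge 1-1/n_i$ so that the tail mass ($\ge 1-1/e$) and the worst-case conditional factor ($\ge (1-p'/n_i)^{n-n_i}\ge(1-1/n)^{n-1}\ge 1/e$) are balanced, giving $(e-1)/e^2 > 1/(2e)$. What your route buys is full rigor about the dependence structure and a slightly better constant; what the paper's route buys is brevity, provided one adds the one-line convexity (or positive-correlation) justification that the product over outside nodes is a valid lower bound.
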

\begin{proof}
Let $v'$ be a node in $V \setminus V_i$.
Given that $|V_i| = n_i$ and $\ell_i$ is the maximum label among these $n_i$
nodes, the probability that the label assigned uniformly at random to $v'$ is
larger than $\ell_i$ is $1/(n_i + 1)$.
The probability that $v$ has an edge to $v'$ \textit{and} $v'$ has a higher
label than $\ell_i$ is $p'/(n_i + 1)$.
Therefore the probability that $v'$ has no edge to a node outside $V_i$ with
label larger than $\ell_i$ is 
$$\left(1 - \frac{p'}{n_i + 1}\right)^{n - n_i}.$$
We bound this expression below as follows: 
$$\left(1 - \frac{p'}{n_i + 1}\right)^{n - n_i}  > \left(1 -
\frac{p'}{n_i}\right)^{n} > \left(1 - \frac{1}{n}\right)^{n} > \frac{1}{2e}.$$
\qed
\end{proof}

\begin{theorem}
\label{theorem:ER}
Let $G(\Pi, \pi, p')$ be a clustered \er graph. 
Suppose that the probabilities $\{p_i\}$ and $p'$ and the node subset sizes
$\{n_i\}$ and $n$
satisfy the inequalities: 
$$\mbox{(i)   } n_i p_i^2 > 8n p' \qquad\mbox{and}\qquad \mbox{(ii)   } n_i
p_i^4 > 1800 c \log n,$$
for some constant $c$.
Then, given input $G(\Pi, \pi, p')$, \textsc{Max-LPA} converges correctly to
node partition $\Pi$ in two rounds w.h.p. (Note that condition (ii) implies for
each $i$, $p_i > \frac{\log n_i}{n_i}$ and hence each $G[V_i]$ is connected.)
\end{theorem}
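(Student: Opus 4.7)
The plan is to let $v_i^*$ denote the vertex holding the maximum initial label $\ell_i := \max_{u \in V_i} L_u$ in each cluster and to show that after two rounds of \textsc{Max-LPA} every node of $V_i$ carries $\ell_i$, so that the algorithm outputs exactly $\Pi$. Round 1 will be a propagation phase in which a random $\Theta(p_i)$-fraction of $V_i$ grabs $\ell_i$; round 2 will be a consolidation phase in which $\ell_i$ sweeps the rest by plurality.

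For round 1, observe that $u \in V_i$ adopts $\ell_i$ iff (a) $v_i^* \in N'(u)$ and (b) $u$ has no external neighbor whose initial label exceeds $\ell_i$; these two events depend on disjoint edge sets and are independent conditional on the labels. Event (a) has probability $p_i$ (for $u \neq v_i^*$), and Lemma \ref{lemma:probBound} gives event (b) probability at least $1/(2e)$, so $u$ grabs $\ell_i$ with probability at least $p_i/(2e)$. For a competing internal label $\ell_w$ held by the $r$-th largest node of $V_i$, the analogous calculation gains an extra factor of $(1-p_i)^{r-1}$, because the $r-1$ higher-labeled internal nodes must all be absent from $N(u)$; in particular, for every $w \in V_i \setminus \{v_i^*\}$, one has $\Pr[u \text{ holds } \ell_w] \leq (1-p_i)\,\Pr[u \text{ holds } \ell_i]$ after round 1.

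For round 2, fix $v \in V_i$ and let $X_v$ be the number of $u \in N(v) \cap V_i$ whose round-1 label is $\ell_i$. Conditional on the initial labels, the indicators making up $X_v$ depend on pairwise disjoint edge sets (the edges $uv$, the edges $uv_i^*$, and $u$'s external edges), so $X_v$ is a sum of independent Bernoullis with $\mathbb{E}[X_v] = \Omega(n_i p_i^2)$, and the lower Chernoff bound (\ref{eqn:chernoff3}) together with condition (ii) $n_i p_i^4 > 1800\,c\log n$ gives $X_v \geq (1 - p_i/4)\,\mathbb{E}[X_v]$ with probability at least $1 - n^{-c-3}$. For an internal competitor $\ell_w < \ell_i$, the analogous count has expectation at most $(1-p_i)\,\mathbb{E}[X_v]$ by the discussion above, and Chernoff again (at the $p_i$-scale enabled by (ii)) keeps it below $X_v$. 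For an external competitor $\ell' > \ell_i$ held by some $w \notin V_i$, the count in $v$'s view is at most $|N'(v) \cap N'(w)|$, whose expectation involves one external edge per contributing $u$ and so is $O(np' \cdot p_{\max}) = O(p_i)\,\mathbb{E}[X_v]$ by condition (i) $n_i p_i^2 > 8np'$. Since \textsc{Max-LPA} breaks ties in favor of the larger label, $v$ selects $\ell_i$ in round 2, and a union bound over at most $n$ candidate competing labels and $n$ choices of $v$ (absorbed by the $c \log n$ slack in (ii)) preserves the conclusion with high probability.

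The main obstacle is the remarkably tight $1 + \Theta(p_i)$ gap between $\mathbb{E}[X_v]$ and the expected count of the nearest internal competitor: Chernoff must resolve $p_i$-scale fluctuations of a mean of order $n_i p_i^2$, which is exactly why condition (ii) requires $n_i p_i^4$ (rather than merely $n_i p_i^2$) to exceed $\log n$. Condition (i), on the other hand, is what suppresses external contamination from any single label held outside $V_i$.
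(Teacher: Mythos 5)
Your proposal follows essentially the same two-round strategy as the paper (round 1: the cluster maximum $\ell_i$ captures a $\beta p_i$ fraction of $V_i$ with $\beta\ge 1/(2e)$ via Lemma~\ref{lemma:probBound}, while any internal competitor captures at most a $(1-p_i)\beta p_i$ fraction; round 2: a Chernoff comparison at relative scale $p_i$, which is exactly why (ii) needs $n_ip_i^4\gtrsim\log n$), and your per-node concentration of $X_v$ is a harmless variant of the paper's two-stage concentration (first label counts, then neighbor counts). However, there are two concrete gaps. The first is your treatment of externally originating labels. For a label $\ell'$ starting at $w\notin V_i$, say $w\in V_j$, condition (i) only gives $\mathbb{E}\left|N'(v)\cap N'(w)\right|\le n_ip_ip'+n_jp_jp'+np'^2\le p_{\max}\,np'<p_{\max}\,n_ip_i^2/8$, and this is $O(p_i)\,\mathbb{E}[X_v]$ only if $p_{\max}=O(p_i)$; the theorem allows other clusters to be much denser than $V_i$, in which case the external count is suppressed only by the constant margin $1/8$ versus $\beta\ge 1/(2e)\approx 0.18$, not by a factor $p_i$. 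The conclusion still closes, but via this tight constant comparison, not via the mechanism you state, so the step "$O(np'\cdot p_{\max})=O(p_i)\,\mathbb{E}[X_v]$ by condition (i)" is false as written. Moreover you only consider external labels $\ell'>\ell_i$; an external-origin label \emph{smaller} than $\ell_i$ (e.g.\ the maximum of a dense neighboring cluster) can also be held by many of $v$'s external neighbors after round 1 and would defeat $\ell_i$ if it attained a strictly larger count, so it needs the same bound. (For what it is worth, the paper's own round-2 comparison only treats competitors $\ell_{u_j}$ from inside $V_i$, so your intent here is actually more thorough than the paper's; it just is not carried out correctly.)

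The second gap is that you stop at "after two rounds every node of $V_i$ carries $\ell_i$, so the algorithm outputs exactly $\Pi$." The theorem asserts \emph{convergence} to $\Pi$, i.e.\ that this labeling is a fixed point from which no further updates occur, and that requires a separate (easy) argument which your proposal omits: the paper finishes by applying condition (i) and the tail bound (\ref{eqn:chernoff1}) to show that w.h.p.\ every $v\in V_i$ has more than $(1-\epsilon)n_ip_i$ neighbors inside $V_i$ (all now labeled $\ell_i$) and fewer than $(1+\epsilon)np'<(1+\epsilon)n_ip_i/8$ neighbors outside $V_i$ in total, so no other label can match $\ell_i$ in $N'(v)$ and the state is stable. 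You should add both this stability step and a corrected external-competitor bound to make the argument complete.
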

\begin{proof}
Let $V_i = \{u_1, u_2, \ldots, u_{n_i}\}$ and without loss of generality assume
that $\ell_{u_1} 
> \ell_{u_2} > \cdots > \ell_{u_{n_i}}$.
Since all initial node labels are assumed to be distinct, in the first round of
\textsc{Max-LPA}
every node $u \in V$ acquires a label by breaking ties. 
Since ties are broken in favor of larger labels, all neighbors of $u_1$ in $V_i$
that have no neighbor outside $V_i$ with a label larger than $\ell_{u_1}$ 
will acquire the label $\ell_{u_1}$.
Consider a node $v \in V_i$.
Let $\beta$ denote the probability that $v$ has no neighbor outside $V_i$
with label larger than $\ell_{u_1}$.
Note that inequality (i) in the theorem statement implies the hypothesis of
Lemma \ref{lemma:probBound} and therefore $\beta > 1/2e$.
The probability that $v$ is a neighbor of $u_1$ and does not have a neighbor
outside $V_i$ is $\beta \cdot p_i$.
Hence, after the first round of \textsc{Max-LPA}, in expectation, $n_i \cdot
\beta \cdot p_i$ nodes
in $V_i$ would have acquired the label $\ell_{u_1}$.
In the rest of the proof we will use
$$X := n_i \cdot \beta \cdot p_i.$$

Now consider node $u_j$ for $j > 1$.
For a node $v \in V_i$ to acquire the label $\ell_{u_j}$ it must be the case
that $v$ is adjacent to $u_j$, not adjacent to any node in 
$\{u_1, u_2, \ldots, u_{j-1}\}$, and not adjacent to any node outside $V_i$
with a label higher than $\ell_{u_j}$.
Since $\ell_{u_j}$ is smaller than $\ell_{u_1}$, the probability that $v$
is not adjacent to a node outside $V_i$ with label higher than $\ell_{u_j}$
is less than $\beta$.
Thus the probability that a node in $V_i$ acquires the label $\ell_{u_j}$ is at
most $p_i (1 - p_i)^{j-1} \cdot \beta < p_i (1 - p_i) \cdot \beta$.
Furthermore, the probability that a node outside $V_i$ will acquire 
the label $\ell_{u_j}$ at the end of the first round is at most $p'$.
Therefore, the expected number of nodes in $V$ that acquire the label $u_j$, at
the end of the first round,
is in expectation at most $n_i \cdot p_i (1 - p_i) \cdot \beta + (n - n_i)p'$.
We now use inequality (i) and the fact that $2\beta e > 1$ to upper bound this
expression as follows:
$$n_i \cdot p_i(1 - p_i) \cdot \beta + (n - n_i)p' < n_i \cdot p_i(1 - p_i)
\cdot \beta + \frac{2 \beta e \cdot n_i p_i^2}{8} <
n_i \cdot p_i\left(1 - \frac{3p_i}{4}\right) \cdot \beta.$$
Therefore, the expected number of nodes in $V$ that acquire the label $u_j$, at
the end of the first round,
is in expectation at most 
$$Y := n_i \cdot p_i\left(1 - \frac{3p_i}{4}\right) \cdot \beta.$$ 
It is worth mentioning at this point that $X - Y = n_i p_i^2 \beta/4$.

Note that all the random variables we have utilized thus far, e.g., the number
of nodes adjacent to $u_1$
and not adjacent to any node outside $V_i$ with label higher than $\ell_{u_1}$,
can be expressed as sums 
of independent, identically distributed indicator random variables.
Hence we can bound the deviation of such random variables using the tail bound
in (\ref{eqn:chernoff2}).
In particular, let $Y'$ denote $Y + \sqrt{3c Y \log n}$ and $X'$ denote $X -
\sqrt{3c X \log n}$.
With high probability, at the end of the first round of \textsc{Max-LPA}, the
number of nodes in $V_i$ 
that acquire the label $u_1$ is at least $X'$ and the number of nodes in $V$
that acquire the label $\ell_{u_j}$, $j > 1$,
is at most $Y'$.
Next we bound the ``gap'' between $X'$ and $Y'$ as follows:
\begin{align*}
X' - Y' & =  X - Y - \sqrt{3c X \log n} - \sqrt{3c Y \log n}\\
        & >  \frac{3n_i p_i^2 \beta}{4} - 2 \sqrt{3c X \log n}\\
        & >  \frac{3n_i p_i^2 \beta}{4} - 2 \sqrt{3c n_i p_i \beta \log n}\\
        & >  \frac{3n_i p_i^2 \beta}{4} - \frac{3n_i p_i^2 \beta}{5}\\
        & =  \frac{3 n_i p_i^2 \beta}{20} 
\end{align*} 
The second inequality follows from $X - Y = 3n_ip_i^2\beta/4$ and $Y < X$,
the third from the fact that $X = n_i p_i \beta$, and the fourth by using
inequality (ii) from
the theorem statement.

We now condition the execution of the second round of \textsc{Max-LPA} on the
occurrence of the two high probability
events: (i) number of nodes in $V_i$ that acquire the label $u_1$ is at least
$X'$ and 
(ii) the number of nodes in $V$ that acquire the label $\ell_{u_j}$, $j > 1$,
is at most $Y'$.
Consider a node $v \in V_i$ just before the execution of the second round of
\textsc{Max-LPA}.
Node $v$ has in expectation at least $p_i X'$ neighbors labeled $\ell_{u_1}$ in
$V_i$.
Also, node $v$ has in expectation at most $p_i Y'$ neighbors labeled
$\ell_{u_j}$, for each $j > 1$, in $V$.
Let us now use $X''$ to denote the quantity $p_i X' - \sqrt{3c p_i X' \log n}$
and
$Y''$ to denote the quantity $p_i Y' + \sqrt{3c p_i Y' \log n}$.
By using (\ref{eqn:chernoff2}) again, we know that w.h.p. $v$ has
at least
$X''$ neighbors with label $\ell_{u_1}$ and at most $Y''$ neighbors with a label
$\ell_{u_j}$, $j > 1$.
We will now show that $X'' > Y''$ and this will guarantee that in the second
round of \textsc{Max-LPA} $v$ will acquire the label $\ell_{u_1}$, with high
probability. Since $v$ is an
arbitrary node in $V_i$, this implies that all nodes in $V_i$ will acquire the
label $\ell_{u_1}$ 
in the second round of \textsc{Max-LPA} w.h.p.
\begin{align*}
X'' - Y'' & =  p_i(X' - Y') - \sqrt{3c p_i X' \log n} - \sqrt{3c p_i Y' \log
n}\\
        & >  \frac{3 n_i p_i^3}{20} - 2 \sqrt{3c p_i X' \log n}\\
        & >  \frac{3 n_i p_i^3}{20} - 2 \sqrt{3c n_i p_i^2 \beta \log n}\\
        & >  \frac{3 n_i p_i^3}{20} - \frac{n_i p_i^3 \beta}{10}\\
        & =  \frac{3 n_i p_i^2}{20}\\
        & >  0
\end{align*}
The second inequality follows from the bound on $X' - Y'$ derived earlier and
$Y' < X'$,
the third from the fact that $X' < n_i p_i \beta$, and the fourth by using
inequality (ii) from
the theorem statement.

Thus at the end of the second round of \textsc{Max-LPA}, w.h.p.,
every node in $V_i$ has label
$\ell_{u_1}$. This is of course true, w.h.p., for all of the
$V_i$'s.
Now note that every node $v \in V_i$ has, in expectation $n_i p_i$ neighbors in
$V_i$ and
fewer than $n p'$ neighbors outside $V_i$.
Inequality (i) implies that $n p' < n_i p_i/8$ and inequality (ii) implies that
$n_i p_i = \Omega(\log n)$.
Pick a constant $\epsilon > 0$ such that $n_i p_i (1 + \epsilon)/8 < n_i p_i (1
- \epsilon)$.
By applying tail bound (\ref{eqn:chernoff1}), we see that w.h.p.
$v$ has more than
$n_i p_i (1 - \epsilon)$ neighbors in $V_i$ and fewer than $n_i p_i (1 +
\epsilon)/8$ neighbors
outside $V_i$. Hence, w.h.p. $v$ has no reason to change its label.
Since $v$ is an arbitrary node in an arbitrary $V_i$, w.h.p. there are no
further changes
to the labels assigned by \textsc{Max-LPA}.
\qed
\end{proof}

To understand the implications of Theorem \ref{theorem:ER} consider the
following example.
Suppose that the clustered \er graph has $O(1)$ clusters and each cluster had
size $\Theta(n)$.
In such a setting, inequality (ii) from the theorem simplifies to requiring that
each $p_i = \Omega((\log n/n)^{1/4})$
and inequality (ii) simplifies to $p' < p_i^2/c$ for all $i$.
This tells us, for instance, that \textsc{Max-LPA} converges in just two rounds
on a clustered \er graph in which
each cluster has $\Theta(n)$ vertices and an intra-community probability of
$\Theta(1/n^{1/3})$ and
the inter-community probability is $\Theta(1/n^{2/3})$.

This example raises several questions. 
If we were willing to allow more time for \textsc{Max-LPA} to converge, say
$O(\log n)$ rounds, could we significantly
weaken the requirements on the $p_i$'s and $p'$.
Specifically, could we permit an intra-community probability $p_i$ to become as small as 
$c \log n/n$ for some constant $c > 1$?
Similarly, could we permit the inter-community probability $p'$ to come much closer to the smallest
$p_i$, say within a constant factor.

We believe that it may be possible to obtain such results, but only via substantively different
analysis techniques.

 \section{Empirical Results on Sparse \er Graphs}
\label{sec:erp}
In the previous section we proved that if the clusters (each $V_i$) in a
clustered \er graphs were dense enough and the inter-cluster edge density
(fraction of edges between nodes in different $V_i$) was relatively low, then
\textsc{Max-LPA} would
correctly converge in just 2 rounds. Specifically, our result requires each
cluster to be \er random graph $G(n,p)$ with $p=O\left(\left(\frac{\log n}{n}\right)^{1/4}\right)$.
In this section we ask: how
does \textsc{Max-LPA} behave if individual clusters are much sparser? For
example, how
does \textsc{Max-LPA} behave on $G(n,p)$ with much smaller $p$, say
$p=\frac{c\cdot
\log n}{n}$ for some $c>1$. The proof technique used in the previous section
does not extend to such small values of $p$. However, we believe that
\textsc{Max-LPA}
converges quickly and correctly even on clustered \er graphs whose clusters are
of the type $G(n,p)$ for $p = \frac{c\cdot \log n}{n}$ for $c>1$. In this
section, 
we ask (and empirically answer) two questions:
\begin{enumerate}
\item Can one expect there to be a constant $c$ such that \textsc{Max-LPA}, when
run
on $G(n,p)$ with $p \ge \frac{c \log n}{n}$ will, with high probability,
terminate with one community.
If the answer to Question 1 is ``yes'' what might the running time of 
\textsc{Max-LPA}, as a function of $n$ be for appropriate values of $p$.

\item Consider a clustered \er graph with two parts $V_1$ and $V_2$ of equal size,
and each $p_i  = \frac{c \log n}{n}$ for some $c > 1$.
Let $p' = \frac{c'}{n}$ for some $c'$.
Are there constants $c, c'$ for which \textsc{Max-LPA} will quickly converge
and correctly identify $(V_1, V_2)$ as the community structure?
\end{enumerate}
We are interested in values of $p$ of the form $\frac{c\cdot \log n}{n}$ because
$\frac{\log n}{n}$ is the threshold for
connectivity in \er graphs~\cite{erdos1960evolution}.

 \subsection{Simulation Setup}
  We implemented \textsc{Max-LPA} in a C program and executed on a Linux machine
 (with 2.4
 GHz Intel(R) Core(TM)2 processor). We examined the number of rounds it takes
and
 also number of communities it declares at the end of the execution.
 We executed \textsc{Max-LPA} on $G(n,p)$ and on $G(\Pi, \pi, p')$ with
 $\Pi = (V_1, V_2)$, $|V_1| = |V_2| = n/2$, $\pi = (p, p)$, $p'=0.6/n$ for
various values of $n$ and $p$.
 For each $n$, $p$ combination we ran \textsc{Max-LPA} 50 times. We used $p$
values of
 the form $\frac{c\cdot \log n}{n}$ for various values of $c\geq 1$.
 
 \subsection{Results}
 We executed \textsc{Max-LPA} using the setup discussed above.
 Table~\ref{tab:multi} shows the number of simulations out of 50 simulations per
 $n$ and $c$ values for which it ended up in
 a single community for each pair of $n$ and $c$. If the input graph is
 disconnected then obviously there will be multiple communities.  Therefore, we
 also noted
 number of simulations for which the graph was connected 
 and this number is shown in the brackets.
 \begin{table}[!ht]
 \centering
 \caption{This table shows simulations on \er graphs $G(n, p)$ where
 $p = \frac{c \log n}{n}$.
 Each entry in the table shows the
 number of simulations out of 50 simulations per $n$ and $c$ values in
 which a single community is 
 declared by \textsc{Max-LPA} and number of simulations in which
 the graph $G(n,p)$ was connected is shown in brackets.\label{tab:multi}} 
 \begin{tabularx}{\textwidth}{|X|X|X|X|X|}
 \hline
   $n$ & $c=1$ & $c=1.2$ & $c=1.5$ & $c=1.7$\\ \hline
 1000 & 44 (50) & 47 (47) & 50 (50) & 50 (50)\\ \hline
 2000 & 42 (46) & 47 (50) & 47 (50) & 50 (50)\\ \hline
 4000 &  45 (47) &  49 (50) & 50 (50) & 50 (50) \\ \hline
 8000 &  47 (48)  & 50 (50) & 50 (50) & 50 (50) \\ \hline
 16000 &  49 (50) & 50 (50) & 50 (50) & 50 (50) \\ \hline
 32000 &  49 (50) & 50 (50) & 50 (50) & 50 (50) \\ \hline
 64000 &  50 (50) & 50 (50) & 50 (50) & 50 (50)\\ \hline
 128000 &  50 (50) & 50 (50) & 50 (50) & 50 (50)\\ \hline
  \end{tabularx}
 \end{table}
 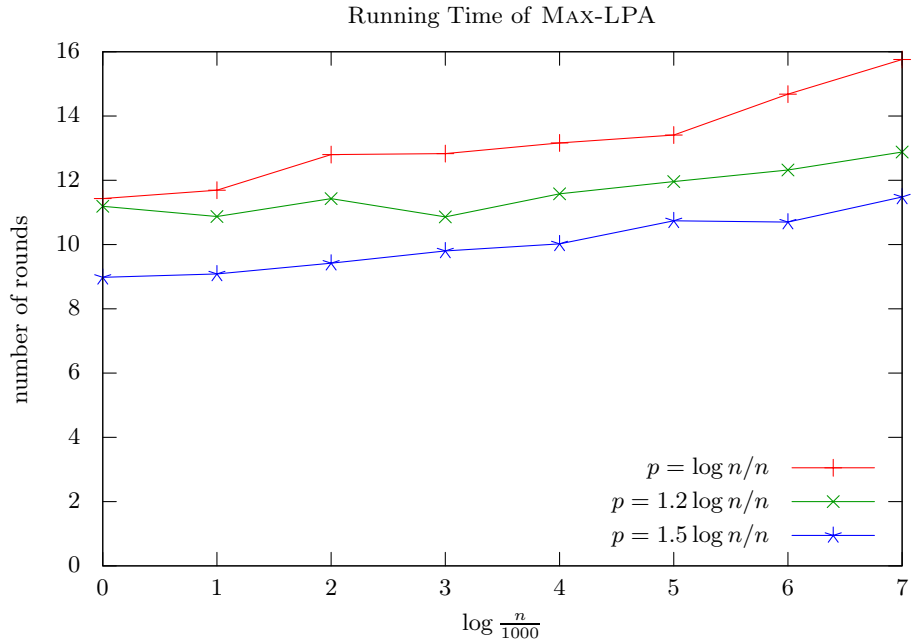
\begin{figure}[!ht]
 \centering
  \begin{tikzpicture}[gnuplot]
\gpsolidlines
\gpcolor{gp lt color border}
\gpsetlinetype{gp lt border}
\gpsetlinewidth{1.00}
\draw[gp path] (1.320,0.985)--(1.500,0.985);
\draw[gp path] (11.947,0.985)--(11.767,0.985);
\node[gp node right] at (1.136,0.985) { 0};
\draw[gp path] (1.320,1.840)--(1.500,1.840);
\draw[gp path] (11.947,1.840)--(11.767,1.840);
\node[gp node right] at (1.136,1.840) { 2};
\draw[gp path] (1.320,2.695)--(1.500,2.695);
\draw[gp path] (11.947,2.695)--(11.767,2.695);
\node[gp node right] at (1.136,2.695) { 4};
\draw[gp path] (1.320,3.550)--(1.500,3.550);
\draw[gp path] (11.947,3.550)--(11.767,3.550);
\node[gp node right] at (1.136,3.550) { 6};
\draw[gp path] (1.320,4.405)--(1.500,4.405);
\draw[gp path] (11.947,4.405)--(11.767,4.405);
\node[gp node right] at (1.136,4.405) { 8};
\draw[gp path] (1.320,5.260)--(1.500,5.260);
\draw[gp path] (11.947,5.260)--(11.767,5.260);
\node[gp node right] at (1.136,5.260) { 10};
\draw[gp path] (1.320,6.115)--(1.500,6.115);
\draw[gp path] (11.947,6.115)--(11.767,6.115);
\node[gp node right] at (1.136,6.115) { 12};
\draw[gp path] (1.320,6.970)--(1.500,6.970);
\draw[gp path] (11.947,6.970)--(11.767,6.970);
\node[gp node right] at (1.136,6.970) { 14};
\draw[gp path] (1.320,7.825)--(1.500,7.825);
\draw[gp path] (11.947,7.825)--(11.767,7.825);
\node[gp node right] at (1.136,7.825) { 16};
\draw[gp path] (1.320,0.985)--(1.320,1.165);
\draw[gp path] (1.320,7.825)--(1.320,7.645);
\node[gp node center] at (1.320,0.677) { 0};
\draw[gp path] (2.838,0.985)--(2.838,1.165);
\draw[gp path] (2.838,7.825)--(2.838,7.645);
\node[gp node center] at (2.838,0.677) { 1};
\draw[gp path] (4.356,0.985)--(4.356,1.165);
\draw[gp path] (4.356,7.825)--(4.356,7.645);
\node[gp node center] at (4.356,0.677) { 2};
\draw[gp path] (5.874,0.985)--(5.874,1.165);
\draw[gp path] (5.874,7.825)--(5.874,7.645);
\node[gp node center] at (5.874,0.677) { 3};
\draw[gp path] (7.393,0.985)--(7.393,1.165);
\draw[gp path] (7.393,7.825)--(7.393,7.645);
\node[gp node center] at (7.393,0.677) { 4};
\draw[gp path] (8.911,0.985)--(8.911,1.165);
\draw[gp path] (8.911,7.825)--(8.911,7.645);
\node[gp node center] at (8.911,0.677) { 5};
\draw[gp path] (10.429,0.985)--(10.429,1.165);
\draw[gp path] (10.429,7.825)--(10.429,7.645);
\node[gp node center] at (10.429,0.677) { 6};
\draw[gp path] (11.947,0.985)--(11.947,1.165);
\draw[gp path] (11.947,7.825)--(11.947,7.645);
\node[gp node center] at (11.947,0.677) { 7};
\draw[gp path] (1.320,7.825)--(1.320,0.985)--(11.947,0.985)--(11.947,7.825)--cycle;
\node[gp node center,rotate=-270] at (0.246,4.405) {number of rounds};
\node[gp node center] at (6.633,0.215) {$\log \frac{n}{1000}$};
\node[gp node center] at (6.633,8.287) {Running Time of \textsc{Max-LPA}};
\node[gp node right] at (10.299,2.290) {$p=\log n/n$};
\gpcolor{gp lt color 0}
\gpsetlinetype{gp lt plot 0}
\draw[gp path] (10.483,2.290)--(11.579,2.290);
\draw[gp path] (1.320,5.872)--(2.838,5.983)--(4.356,6.457)--(5.874,6.470)--(7.393,6.612)%
  --(8.911,6.717)--(10.429,7.261)--(11.947,7.722);
\gpsetpointsize{8.00}
\gppoint{gp mark 1}{(1.320,5.872)}
\gppoint{gp mark 1}{(2.838,5.983)}
\gppoint{gp mark 1}{(4.356,6.457)}
\gppoint{gp mark 1}{(5.874,6.470)}
\gppoint{gp mark 1}{(7.393,6.612)}
\gppoint{gp mark 1}{(8.911,6.717)}
\gppoint{gp mark 1}{(10.429,7.261)}
\gppoint{gp mark 1}{(11.947,7.722)}
\gppoint{gp mark 1}{(11.031,2.290)}
\gpcolor{gp lt color border}
\node[gp node right] at (10.299,1.840) {$p=1.2 \log n/n$};
\gpcolor{gp lt color 1}
\gpsetlinetype{gp lt plot 1}
\draw[gp path] (10.483,1.840)--(11.579,1.840);
\draw[gp path] (1.320,5.769)--(2.838,5.633)--(4.356,5.871)--(5.874,5.628)--(7.393,5.935)%
  --(8.911,6.098)--(10.429,6.252)--(11.947,6.491);
\gppoint{gp mark 2}{(1.320,5.769)}
\gppoint{gp mark 2}{(2.838,5.633)}
\gppoint{gp mark 2}{(4.356,5.871)}
\gppoint{gp mark 2}{(5.874,5.628)}
\gppoint{gp mark 2}{(7.393,5.935)}
\gppoint{gp mark 2}{(8.911,6.098)}
\gppoint{gp mark 2}{(10.429,6.252)}
\gppoint{gp mark 2}{(11.947,6.491)}
\gppoint{gp mark 2}{(11.031,1.840)}
\gpcolor{gp lt color border}
\node[gp node right] at (10.299,1.390) {$p=1.5 \log n/n$};
\gpcolor{gp lt color 2}
\gpsetlinetype{gp lt plot 2}
\draw[gp path] (10.483,1.390)--(11.579,1.390);
\draw[gp path] (1.320,4.824)--(2.838,4.869)--(4.356,5.012)--(5.874,5.175)--(7.393,5.269)%
  --(8.911,5.576)--(10.429,5.559)--(11.947,5.893);
\gppoint{gp mark 3}{(1.320,4.824)}
\gppoint{gp mark 3}{(2.838,4.869)}
\gppoint{gp mark 3}{(4.356,5.012)}
\gppoint{gp mark 3}{(5.874,5.175)}
\gppoint{gp mark 3}{(7.393,5.269)}
\gppoint{gp mark 3}{(8.911,5.576)}
\gppoint{gp mark 3}{(10.429,5.559)}
\gppoint{gp mark 3}{(11.947,5.893)}
\gppoint{gp mark 3}{(11.031,1.390)}
\gpcolor{gp lt color border}
\gpsetlinetype{gp lt border}
\draw[gp path] (1.320,7.825)--(1.320,0.985)--(11.947,0.985)--(11.947,7.825)--cycle;
\gpdefrectangularnode{gp plot 1}{\pgfpoint{1.320cm}{0.985cm}}{\pgfpoint{11.947cm}{7.825cm}}
\end{tikzpicture}
  \caption{Number of rounds for \textsc{Max-LPA} when executed on sparse
 \er (averaged over simulations where it ended with a single
 community out of 50 simulations per $n$ and $p$). \label{fig:rounds}}
 \end{figure}
 
 It is well known that $p=\frac{\log n}{n}$ is a
 threshold for connectivity in \er graphs and therefore we are getting few runs
 for $c=1$ where the input graph was disconnected. From Table~\ref{tab:multi},
 we can say that \textsc{Max-LPA} when executed on \er graphs
 with $p=\frac{c\log n}{n}$ and $c>1$, with high probability, terminate with one
 community. It also seem to be the case that as $c$ increases, we are getting
 more single community runs. This is because as $c$ increases, the graph become
 more dense. 
 
 Figure~\ref{fig:rounds} shows a plot of the number of rounds \textsc{Max-LPA}
takes to converge on $G(n, p)$ as $n$ increases,
 averaged over all simulations which resulted in a single community at the
 end of the execution. The running time seems to grow in a linear fashion with
 logarithm of graph size. Also as $c$ increases the running time decreases,
 which implies that as the graph becomes more dense \textsc{Max-LPA} converges
more
 quickly to a single community. Our results lead us to conjecture that when 
 \textsc{Max-LPA} is executed on \er graphs $G(n,p)$ with $p=O(\frac{\log
 n}{n})$ it will, with high probability, terminate with a single community in
$O(\log n)$
 rounds.
 
 Table~\ref{tab:clustered} shows the number of simulations out of
 50 simulations per
 $n$ and $c$ values for which \textsc{Max-LPA} correctly identified the
 partition $\Pi$ when executed on $G(\Pi,\pi,p')$ for $p'=\frac{0.6}{n}$. From
 previous results in Table~\ref{tab:multi}, for $c=1.5$ \textsc{Max-LPA}
 declared a single community when executed on $G(n,p)$ w.h.p. Therefore in this
 experiments we started with $c=1.5$. But for $c=1.5$, the influence from
 the nodes from other partition is significant. As $c$ increases this influence
 is not significant compared to the influence from nodes within the same
 partition. 
 \begin{table}[!ht]
 \centering
 \caption{This table shows simulations of \textsc{Max-LPA} on $G(\Pi, \pi, p')$
 with $\Pi = (V_1, V_2)$, $|V_1| = |V_2| = n/2$, $\pi = (p, p)$, where
 $p = \frac{c \log n}{n}$ and $p' = \frac{0.6}{n}$.
 Each entry in the table shows, for particular $n$ and $c$ values, the number of
 simulations out of 50 
 in which \textsc{Max-LPA} identified two communities $V_1$ and $V_2$.
 The number of simulations in which
 graph was connected is shown in brackets.\label{tab:clustered}} 
 \begin{tabularx}{\textwidth}{|X|X|X|X|}
 \hline
  $n$ & $c=1.5$ & $c=2$ & $c=4$ \\ \hline
 1000  & 22 (45)  & 39 (50)  & 50 (50)  \\ \hline
 2000  & 21 (39)  & 40 (50)  & 50 (50)  \\ \hline
 4000  & 22 (36)  & 47 (50)  & 50 (50)  \\ \hline
 8000  & 14 (38)  & 47 (50)  & 50 (50)  \\ \hline
 16000  & 26 (35)  & 49 (49)  & 50 (50)  \\ \hline
 32000  & 17 (33)  & 49 (49)  & 50 (50)  \\ \hline
 64000  & 26 (34)  & 46 (50)  & 50 (50)  \\ \hline
 128000  & 5 (35)  & 47 (47)  & 50 (50)  \\ \hline
 \end{tabularx}
 \end{table}

\section{Future Work}
\label{section:futureWork}  

We believe that with some refinements, the analysis technique used to show 
$O(\log n)$-rounds convergence of \textsc{Max-LPA} on paths, can be used
to show poly-logarithmic convergence on sparse graphs in general,
e.g., those with degree bounded by a constant.
This is one direction we would like to take our work in.

At this point the techniques used in Section~\ref{sec:er} do not seem applicable to more sparse
clustered \er graphs. But if we were willing to allow more time for
\textsc{Max-LPA} to converge, say
$O(\log n)$ rounds, could we significantly
weaken the requirements on the $p_i$'s and $p'$?
Specifically, could we permit an intra-community probability $p_i$ to become as
small as 
$c \log n/n$ for some constant $c > 1$?
Similarly, could we permit the inter-community probability $p'$ to come much
closer to the smallest
$p_i$, say within a constant factor?
This is another direction for our research.

\subsubsection*{Acknowledgments.}
We would like to thank James Hegeman for helpful discussions and for
some
insightful comments.

\bibliography{socialnetwork}
\end{document}